\documentclass[runningheads,a4paper]{llncs}

\setcounter{tocdepth}{3}

\usepackage{amsmath}
\usepackage{amssymb,bbm}
\usepackage{graphicx}
\usepackage{psfrag}
\usepackage{lscape}

\newcommand{\R}{\mathbb{R}}
\newcommand{\Z}{\mathbb{Z}}
\newcommand{\IS}{\mathbb{S}}

\newcommand{\rel}{\R}

\def\ds{\displaystyle}

\newcommand{\fixedset}{\mathcal{J}}

\title{On a linear programming approach to the discrete Willmore boundary value problem and generalizations}
\titlerunning{A linear programming approach to the discrete Willmore boundary problem}

\author{Thomas Schoenemann\inst{1}\and Simon
Masnou\inst{2} \and Daniel Cremers\inst{3} }
\institute{Department of Mathematical Sciences, Lund University, Sweden\and Institut Camille Jordan, Universit\'e Claude-Bernard Lyon~1, CNRS, France \and Department of Computer Science, TU M\"unchen, Germany}
\begin{document}
\toctitle{Lecture Notes in Computer Science}
\tocauthor{Authors' Instructions}

\maketitle

\begin{abstract}
We consider the problem of finding (possibly non connected) discrete surfaces spanning a finite set of
discrete boundary curves in the three-dimensional space and minimizing (globally) a discrete energy involving mean curvature. Although we consider a fairly general class of energies, our main focus is on the Willmore energy, i.e. the total squared mean curvature.

Most works in the literature have been devoted to the approximation of a surface evolving by the Willmore flow and, in particular,  to the approximation of the so-called Willmore surfaces, i.e., the critical points of the Willmore energy.
Our purpose is to address the delicate task of approximating {\it global} minimizers of the energy under boundary constraints. 

The main contribution of this work is to translate the nonlinear boundary value problem into an integer linear program, using a natural formulation involving pairs of elementary triangles chosen in a pre-specified dictionary and allowing self-intersection.

The reason for such strategy is the well-known existence of algorithms that can compute {\it global minimizers} of a large class of linear optimization problems, however at a significant computational and memory cost. 
The case of integer linear programming is particularly delicate and usual strategies consist in relaxing the integral constraint $x\in\{0,1\}$ into $x\in[0,1]$ which is easier to handle. Our work focuses essentially on the connection between the integer linear program and its relaxation. We prove that:
\begin{itemize}
\item One cannot guarantee the total unimodularity of the constraint matrix, which is a sufficient condition for the global solution of the relaxed linear program to be always integral, and therefore to be a solution of the integer program as well;
\item Furthermore, there are actually experimental evidences that, in some cases, solving the relaxed problem yields a fractional solution.
\end{itemize}
These facts prove that the problem cannot be tackled with classical linear programming solvers, but only with pure integer linear solvers. Nevertheless, due to the very specific structure of the constraint matrix here, we strongly believe that it should be possible in the future to design ad-hoc integer solvers that  yield high-definition approximations to solutions of several boundary value problems involving mean curvature, in particular the Willmore boundary value problem.


\end{abstract}

\section{Introduction}
The Willmore energy of an immersed compact oriented surface $f:\Sigma\to\R^N$ with boundary $\partial\Sigma$ is defined as
$${\cal W}(f)=\int_\Sigma|H|^2dA+\int_{\partial\Sigma}\kappa\,ds$$
where $H$ is the mean curvature vector on $\Sigma$, $\kappa$ the
geodesic curvature on $\partial\Sigma$, and $dA$, $ds$ the induced
area and length metrics on $\Sigma$, $\partial\Sigma$. The Willmore
energy of surfaces with or without boundary plays an important role in
geometry, elastic membranes theory, strings theory, and image
processing. Among the many concrete optimization problems where the
Willmore functional appears, let us mention for instance the modeling
of biological membranes, the design of glasses, and the smoothing of
meshed surfaces in computer graphics. The Willmore energy is the
subject of a long-standing research not only due to its relevance to
some physical situations but also due to its fundamental property of
being conformal invariant, which makes it an interesting substitute to
the area functional in conformal geometry. Critical points of ${\cal
  W}$ with respect to interior variations are called Willmore
surfaces. They are solutions of the Euler-Lagrange equation
$\delta{\cal W}=0$ whose expression is particularly simple when $N=3$:
$\Delta H+2H(H^2-K)=0$, being $K$ the Gauss curvature.  It is known
since Blaschke and Thomsen~\cite{pinkall-sterling} that stereographic
projections of compact minimal surfaces in $\IS^3\subset\R^4$ are always
Willmore surfaces in $\R^3$. However, Pinkall exhibited
in~\cite{Pinkall} an infinite series of compact embedded Willmore
surfaces that are not stereographic projections of compact embedded
minimal surfaces in $\IS^3$. Yet Kusner conjectured~\cite{Kusner} that
stereographic projections of Lawson's $g$-holed tori in $\IS^3$
should be global minimizers of ${\cal W}$ among all genus $g$
surfaces. This conjecture is still open, except of course for the case $g=0$ where the round sphere is known to be the unique global minimizer.

The existence of smooth surfaces that minimize the Willmore energy spanning a given boundary and a conormal field has been proved by Sch{\"a}tzle in~\cite{Sch}. Following the notations in~\cite{Sch}, we consider a smooth embedded closed oriented curve $\Gamma\subset\R^N$ together with a smooth unit normal field $n_\Gamma\in N_\Gamma$ and we denote as $\pm\Gamma$ and $\pm n_\Gamma$ their possible orientations. We assume that there exist oriented extensions of $\pm \Gamma$, $\pm n_\Gamma$, that is, there are compact oriented surfaces $\Sigma_-,\,\Sigma_+\subset\R^N$ with boundary $\partial\Sigma_\pm=\pm\Gamma$ and conormal vector field $\operatorname{co}_{\Sigma_\pm}=\pm n_\Gamma$ on $\partial\Sigma_\pm$. We also assume that there exists a bounded open set $B\supset\Gamma$ such that the set
\begin{multline*}\{\Sigma_\pm\mbox{ oriented extensions of } (\Gamma, n_\Gamma),\; \Sigma_+\mbox{ connected },\\
\Sigma_+\cup\Sigma_-\subset B,\;{\cal W}(\Sigma_+\cup\Sigma_-)< 8\pi\mbox\}\end{multline*}
is not empty. The condition on energy ensures that $\Sigma_+\cup\Sigma_-$ is an embedding.

It follows from~\cite{Sch}, Corollary 1.2, that the Willmore boundary problem associated with $(\Gamma,n_\Gamma)$ in $B$ has a solution, i.e., there exists a compact, oriented, connected, smooth surface $\Sigma\subset B$ with $\partial\Sigma=\Gamma$, $\operatorname{co}_\Sigma=n_\Gamma$ on $\partial\Sigma$, and 
$$W(\Sigma)=\min\{W(\tilde\Sigma),\,\tilde\Sigma\mbox{ smooth},\;\tilde\Sigma\subset B,\;\partial\tilde\Sigma=\Gamma,\;\operatorname{co}_{\tilde\Sigma}=n_\Gamma\mbox{ on }\partial\tilde\Sigma\}$$

There have been many contributions to the numerical simulation of Willmore surfaces in space dimension $N=3$. Among them, Hsu, Kusner and Sullivan have tested experimentally in~\cite{HsuKusnerSullivan92} the validity of Kusner's conjecture: starting from a triangulated polyhedron in $\R^3$ that is close to a Lawson's surface of genus $g$, they let it evolve by a discrete Willmore flow using Brakke's Surface Evolver~\cite{Brakke-92} and check that the solution obtained after convergence is ${\cal W}$-stable. Recent updates that Brakke brought to its program give now the possibility to test the flow with various discrete definitions of the mean curvature. Mayer and Simonett~\cite{MayerSim} introduce a finite difference scheme to approximate axisymmetric solutions of the Willmore flow.  Rusu~\cite{Rusu} and Clarenz et al.~\cite{Clarenz-et-al-04} use a finite elements approximation of the flow to compute the evolution of surfaces with or without boundary. In both works, position and mean curvature vector are taken as independent variables, which is also the case of the contribution by Verdera et al.~\cite{Verdera03}, where a triangulated surface with a hole in it is restored using the following approach: by the coarea formula, the Willmore energy (actually a generalization to other curvature exponents) is replaced with the energy of an implicit and smooth representation of the surface, and the mean curvature term is replaced by the divergence of an unknown field that aims to represent the normal field. Droske and Rumpf~\cite{DroskeRumpf04} propose a finite element approach to the Willmore flow but replace the standard flow equation by its level set formulation. The contribution of Dziuk~\cite{dziuk} is twofold: it provides a finite element approximation to the Willmore flow with or without boundary conditions that can handle as well embedded or immersed surfaces (turning the surface problem into a quasi-planar problem), and a consistency result showing the convergence of both the discrete surface and the discrete Willmore energy to the continuous surface and its energy when the approximated surface has enough regularity. Bobenko and Schr{\"o}der~\cite{BobenkoSchroeder05} use a difference strategy: they introduce a discrete notion of mean curvature for triangulated surfaces computed from the circles circumscribed to each triangle that shares with the continuous definition a few properties, in particular the invariance with respect to the full M{\"o}bius group in $\R^3$. This discrete definition is vertex-based and a discrete flow can be derived. Based also on several axiomatic constraints but using a finite elements framework, Wardetzky et al.~\cite{Wardetzky-et-al-07} introduce an edge-based discrete Willmore energy for triangulated surfaces. Olischl\"{a}ger and Rumpf~\cite{Olischlager:2009} introduce a two step time discretization of the Willmore flow that extends to the Willmore case, at least formally, the discrete time approximation of the mean curvature motion due to Almgren, Taylor, and Wang~\cite{Almgren-Taylor-Wang-1993}, and Luckhaus and Sturzenhecker~\cite{Luckhaus-Styrzenhecker-1995}. The strategy consists in using the mean curvature flow to compute an approximation of the mean curvature and plug it in a time discrete approximation of the Willmore flow. Grzibovskis and Heintz~\cite{Heintz03}, and Esedoglu et al.~\cite{EsedogluRuuthTsai-06} discuss how 4th order flows can be approximated by iterative convolution with suitable kernels and thresholding.

While all the previous approaches yield approximations of critical points of the Willmore energy, our motivation in this paper is to approximate global minimizers of the energy. This is an obviously nontrivial task due to the high nonlinearity and nonconvexity of the energy. Yet, for the simpler area functional, Sullivan~\cite{Sullivan-94} has shown with a calibration argument that the task of finding minimal surfaces can be turned into a linear problem. Even more, when a discrete solution is seeked among surfaces that are union of faces in a cubic grid partition of $\R^3$, he proved that the minimization of the linear program is equivalent to solving a minimum-cost circulation network flow problem, for which efficient codes have been developed by Boykov and Kolmogorov~\cite{Boykov-Kolmogorov-01} after Ford and Fulkerson~\cite{Ford-Fulkerson-62}. Sullivan~\cite{Sullivan-94} did not provide experiments in his paper but this was done recently by Grady~\cite{Grady-09}, with applications to the segmentation of medical images. 

The linear formulation that we propose here is based on two key ideas: the concept of surface continuation constraints that has been pioneered by Sullivan~\cite{Sullivan-94} and Grady~\cite{Grady-09}, and the representation of a triangular surface using pairs of triangles. With this representation and a suitable definition of discrete mean curvature, we are able to turn into a linear formulation the task of minimizing discrete representations of any functional of the form
$$W_\varphi(\Sigma)=\int_\Sigma \varphi(x,n,H)dA$$
among discrete immersed surfaces with boundary constraints:
$$\partial\Sigma=\Gamma,\quad\operatorname{co}_{\tilde\Sigma}=n_\Gamma\mbox{ on }\partial\Sigma.$$
In the expression of $W_\varphi(\Sigma)$, $x$ denotes the space variable, $n$ the normal vector field on $\Sigma$ and $H$ the mean curvature vector. The linear problem we obtain involves integer-valued unknowns and does not seem to admit any simple graph-based equivalent. We will therefore discuss whether classical strategies for linear optimization can be used.

The paper is organized as follows: in section~\ref{sec:1} we discuss both the chosen representation of surfaces and the definition of discrete mean curvature. In section~\ref{sec:2} we present a first possible approach yielding a quadratic energy. We present in section~\ref{sec:3} our linear formulation and discuss whether it can be tackled by classical linear optimization techniques.

\section{Discrete framework}\label{sec:1}
\subsection{Triangular meshes from a set of pre-defined triangles}\label{sec:mesh}
The equivalence shown by Sullivan between finding minimal surfaces and solving a flow problem holds true for discrete surfaces defined as a connected set of cell faces in a cellular complex discrete representation of the space. We will consider here polyhedral surfaces defined as union of triangles with vertices in (a finite subset of) the cubic lattice $\epsilon\Z^3$ where $\epsilon=\frac 1 n$ is the resolution scale. Not all possible triangles are allowed but only those respecting a specified limit on the maximal edge length.
We assume that each triangle, as well as each triangle edge, is represented twice, once for each orientation. We let ${\cal I}$ denote the collection of oriented triangles, $N=|{\cal I}|$ its cardinality, and $M$ the number of oriented triangle edges. The constrained boundary is given as a contiguous oriented set of triangle edges. The orientation of the boundary constrains the spanning surfaces since we will allow only spanning triangles whose orientation is compatible.

In this framework, one can represent a triangular mesh as
a binary indicator vector $x = \{0,1\}^N$ where $1$ means that the
respective triangle is present in the mesh, $0$ that it is not. Obviously, not all binary indicator vectors can be associated with a triangular surface since the corresponding triangles may not be contiguous. However, as discussed by Grady~\cite{Grady-09} and, in a slightly different setting, by Sullivan~\cite{Sullivan-thesis,Sullivan-94}, it is possible to write in a linear form the constraint that only binary vectors that correspond to surfaces spanning the given boundary are considered. We will see that using the same approach here turns the initial boundary value problem into a quadratic program. Another formulation will be necessary to get a linear problem.

\subsection{Admissible indicator vectors: a first attempt}
\begin{figure}
\begin{center}
\psfrag{e1}{$e_1$}
\psfrag{e2}{$e_2$}
\psfrag{e3}{$e_3$}
\psfrag{e4}{$e_4$}
\includegraphics[width=0.25 \textwidth]{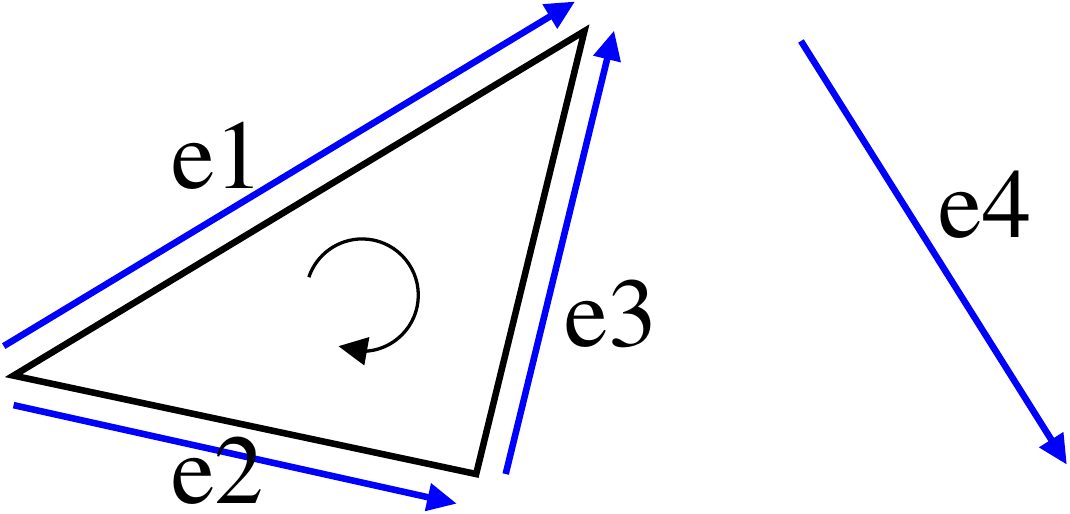}
\end{center}
\caption{Incidence of oriented triangles and edges. $e_1$ is positively incident to the oriented triangle, $e_2$ and $e_3$ are negatively incident, and $e_4$ is not incident to the triangle.}
\label{fig:incidence}
\end{figure}

To define the set of admissible indicator vectors, we first consider a
relationship between oriented triangles and oriented edges which is called
\emph{incidence}: a triangle is positive incident to an edge if the
edge is one of its borders and the two agree in orientation. It is
negative incident if the edge is one of its borders, but in the
opposite orientation. Otherwise it is not incident to the edge. For
example, the triangle in Figure \ref{fig:incidence} is positive
incident to the edge $e_1$, negative incident to $e_2$ and $e_3$ and
not incident to $e_4$.

Being defined as above the set of $N$ oriented triangles and their $M$ oriented edges, we introduce the matrix $B=(b_{ij})_{\begin{subarray}{l}
i\in\{1,\cdots,N\}\\
j\in\{1,\cdots,M\}\end{subarray}}$ whose element $b_{ij}$ gives account of the incidence between triangle $i$ and edge $j$. More precisely

$$
b_{ij} = 
\begin{cases}
1 & \mbox{if edge $i$ is an edge of triangle $j$ with same orientation}\\
-1 & \mbox{if edge $i$ is an edge of triangle $j$ with opposite orientation}\\
0 & \mbox{otherwise}
\end{cases}
$$
The knowledge of which edges are present in the set of prescribed boundary
segments is expressed as a vector $r \in \{-1,0,1\}^{M}$ with
$$
r_j = 
\begin{cases}
1 & \mbox{if the oriented boundary contains the edge $j$}
\\&\mbox{\qquad with agreeing orientation}\\
-1 & \mbox{if the oriented boundary contains the edge $-j$}\\
&\mbox{\qquad with opposing orientation}\\
0 & \mbox{otherwise}
\end{cases}
$$

With these notations set up we can now describe the equation system
defining that a vector $x \in \{0,1\}^N$ encodes an oriented triangular mesh
with the pre-specified oriented boundary. This system has one equation for each
edge. If the edge is not contained in the given boundary, this
equation expresses that, among all triangles indicated by $x$ that contain the edge, there are as many triangles with same orientation as the edge as triangles with opposite orientation. If the edge is contained
in the boundary with coherent orientation, there must be one more
positive incident triangle than negative incident. If it is contained
with opposite orientation, there is one less positive than negative
incident. Altogether the constraint for edge $j$ can be expressed
as the linear equation
$$
\sum\limits_{i} b_{ij}\, x_i = r_j 
$$ 
and the entire system as
\begin{equation}
B\, x = r.
\end{equation}
So far, we did not incorporate the conormal constraint. Actually not all conormal constraints are possible, exactly like not all discrete curves can be spanned in our framework but only union of edges of dictionary triangles, i.e. the collection of triangles defined in the previous section that determine the possible surfaces. For the conormal constraint, only the conormal vectors that are tangent to dictionary triangles sharing an edge with the boundary curve are allowed. Then the conormal constraint can be easily plugged into our formulation by simply imposing the corresponding triangles to be part of the surface, see Figure~\ref{fig:tribound}, and by defining accordingly a new boundary indicator vector $\tilde r$.
\begin{figure}
\begin{center}
\includegraphics[width=0.45 \textwidth]{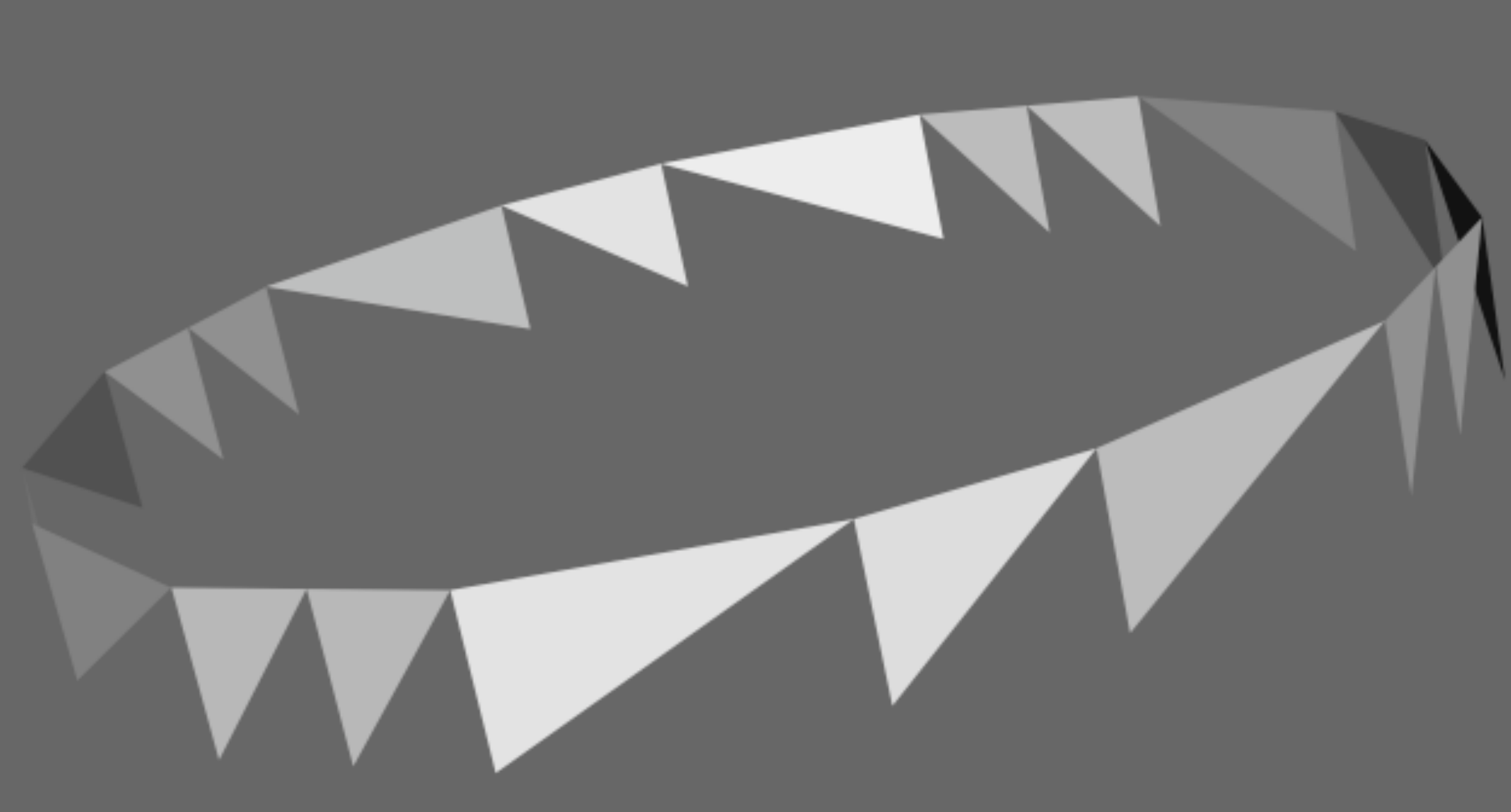}
\caption{The boundary and conormal constraints can be imposed by pre-specifying suitable triangles to be part of the surface.}
\label{fig:tribound}
\end{center}
\end{figure}

Denoting as ${\cal J}$ the collection of those additional triangles, the complete constraint reads
\begin{equation}\label{eq:const}
\left\{\begin{array}{l} 
B\, x = \tilde r\\
x_j=1,\;j\in {\cal J}
\end{array}\right.
\end{equation}
We discuss in the next section how discrete mean curvature can be evaluated in this framework.

\subsection{Discrete mean curvature on triangular meshes}
The various definitions of discrete mean curvature that have been proposed in the literature obviously depend on the chosen discrete representations of surfaces. Presenting and discussing all possible definitions is out of the scope of the present paper. The important thing to know is that there is no fully consistent definition: the pointwise convergence of mean curvature cannot be guaranteed in general but only in specific situations~\cite{Hildebrandt05,Morvan-2008}. Among the many possible definitions, we will use the edge-based one proposed by Polthier~\cite{Polthier-Hab} for it suits with our framework. Recalling that, in the smooth case but also for generalized surfaces like varifolds~\cite{Simon-83}, the first variation of the area can be written in terms of the mean curvature, the definition due to Polthier of the mean curvature vector at an interior edge $e$ of a simplicial surface reads
\begin{equation}
H(e)=|e|\cos\frac{\theta_e}2N_e\label{eq:meancurv}
\end{equation}
where $|e|$ is the edge-length, $\theta_e$ is the dihedral angle between the two triangles adjacent to $e$, and $N_e$ is the angle bisecting unit normal vector, i.e., the unit vector collinear to the half sum of the two unit vectors normal to the adjacent triangles (see figure~\ref{fig:local-triangle}). Remark that this formula is a discrete counterpart of the continuous $H=\kappa_1+\kappa_2$ depending on the principal curvatures, which is used in many papers for simplicity as definition of mean curvature. When the correct continuous definition $H=\frac 1 2(\kappa_1+\kappa_2)$ is used, the formulas above and hereafter should be adapted.
\begin{figure}
\begin{center}
\includegraphics[width=0.35 \textwidth]{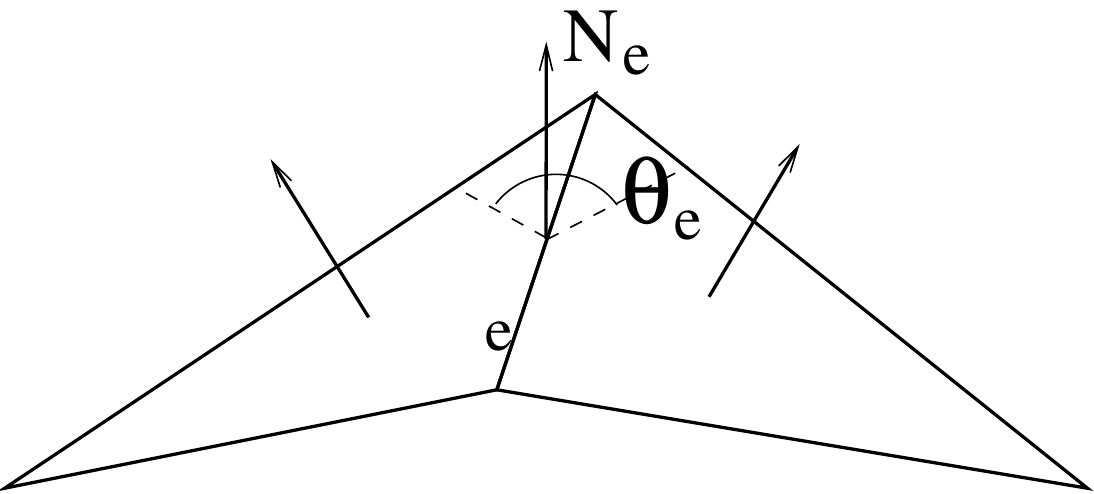}
\caption{The edge-based definition of a discrete mean curvature vector due to Polthier~\cite{Polthier-Hab} depends on the dihedral angle $\theta_e$ and the angle bisecting unit normal vector $N_e$.}
\label{fig:local-triangle}
\end{center}
\end{figure}
The justification of formula~\eqref{eq:meancurv} by Polthier~\cite{Polthier-Hab,Polthier-Clay} is as follows: it is exactly the gradient at any point $m\in e$ of the area of the two triangles $T_1$ and $T_2$ adjacent to $e$, and this gradient does not depend on the exact position of $m$. Indeed, one can subdivide $T_1$, $T_2$ in four triangles $T'_i$, $i\in\{1,\cdots,4\}$ having $m\in e$ as a vertex and such that $T_1=T'_1\cup T'_2$ and $T_2=T'_3\cup T'_4$. The area of each triangle is half the product of the opposite edge's length and the height. Therefore, if $e_i$ is the positively oriented edge opposite to $m$ in the triangle $T'_i$ and $J_1$, $J_2$ the rotations in the planes of $T_1$, $T_2$ by $\frac\pi 2$, the area gradients of $T'_i$, $i\in\{1,\cdots, 4\}$ at $m$ are $\frac 1 2 J_1e_1$, $\frac 1 2 J_1e_2$, $\frac 1 2 J_2e_3$, $\frac 1 2 J_2e_4$. The sum is the total area gradient of $T_1\cup T_2$ at $m$ and equals $\frac 1 2 (J_1 e+J_2 e)$, which coincides with the formula above.

As discussed by Wardetsky et al. using the Galerkin theory of approximation, this discrete mean curvature is an integrated quantity: it scales as $\lambda$ when each space dimension is rescaled by $\lambda$. A pointwise discrete mean curvature rescaling as $\frac 1{\lambda}$ is given by (see~\cite{Wardetzky-et-al-07})
$$H^{{\rm pw}}(e)=\frac{3|e|}{A_e}\cos\frac{\theta_e}2N_e,$$
where $A_e$ denotes the total area of the two triangles adjacent to $e$. The factor $3$ comes from the fact that, when the mean curvatures are summed up over all edges, the area of each triangle is counted three times, once for each edge. Then a discrete counterpart of the energy $\ds\int_\Sigma \varphi(H)\, dA$ is given by
\begin{equation}
\sum_{{\rm edges}\; e}\frac{A_{e}}{3} \varphi(\frac{3|e|}{A_{e}}\cos\frac{\theta_{e}}2N_{e}).\label{eq:globenerg}
\end {equation}

\noindent In particular, the edge-based total squared mean curvature is
\begin{equation}\label{eq:willm}
\sum_{{\rm edges}\; e}\frac{3|e|^2}{A_{e}}(\cos\frac{\theta_e}2)^2.
\end{equation}


\section{A quadratic program for the minimization of the discrete Willmore energy}\label{sec:2}
Ultimately we are aiming at casting the optimization problem in a form
that can be handled by standard linear optimization software. Having in mind the framework described above where a discrete surface spanning the prescribed discrete boundary is given as a collection of oriented triangles satisfying equation~\eqref{eq:const} and chosen among a pre-specified collection of triangles, a somewhat natural direction at first glance seems to be solving a \emph{quadratic program}. Like in section~\ref{sec:mesh}, let us indeed denote as $(x_i)$ the collection of binary variables associated to the ``dictionary'' of triangles $(T_i)$ and define
\begin{itemize}
\item $e_{ij}$ the common edge to two adjacent triangles $T_i$ and $T_j$;
\item $\theta_{ij}$ the corresponding dihedral angle;
\item $N_{ij}$ the angle bisecting unit normal;
\item $A_{ij}$ the total area of both triangles.
\end{itemize}
Then a continuous energy of the form $\ds\int_\Sigma \varphi(x,n,H)dA$ can be discretized as
\begin{equation}\label{eq:formgen}
\sum_{i,j} q_{ij}\, x_i\,
x_j\end{equation}
with $\ds\quad q_{ij}=\left\{\begin{array}{ll}
\ds\frac 1 2\frac{A_{ij}}{3} \varphi(e_{ij},N_{ij},\frac{3|e_{ij}|}{A_{ij}}\cos\frac{\theta_{ij}}2N_{ij})&\mbox{if $i\not=j$ are adjacent}\\[1mm]
\tilde\varphi(T_i,N_i)&\mbox{if $i=j$}\\[1mm]
0&\mbox{otherwise}\end{array}\right.$
\par
\noindent where $\tilde\varphi$ allows to incorporate dependences on each triangle $T_i$'s position and unit normal $N_i$.
In particular, the discrete Willmore energy is
\begin{equation}\sum_{i,j}q^w_{ij}x_i\,
x_j\label{eq:formwill}\end{equation}
with
$$q^w_{ij}=\left\{\begin{array}{ll}
\ds\frac{3|e_{ij}|^2}{2A_{ij}}(\cos\frac{\theta_{ij}}2)^2&\mbox{if $i\not=j$ are adjacent}\\[1mm]
0&\mbox{otherwise}\end{array}\right.$$

Assuming that the maps $\varphi$ and $\tilde\varphi$ are positive-valued, both energy matrices $Q=(q_{ij})$ and $Q^w=(q^w_{ij})$ are symmetric matrices in ${\rel^+}^{N \times
N}$, and the minimization of either~\eqref{eq:formgen} or~\eqref{eq:formwill} with boundary constraints turns to be the following quadratic program with linear and integrality constraints:
\begin{eqnarray*}
& \min\limits_x & \langle Q\, x,x\rangle \\
& \mbox{such that} & B\, x = r\\
&&  x_i = 1\ \, \forall i \in \fixedset\\
& & x \in \{0,1\}^N \quad \quad .
\end{eqnarray*}
We know of no solution to solve this problem efficiently due to the integrality
constraint. What is worse, even the relaxed problem where
one optimizes over $x \in [0,1]^N$ is very hard to solve: terms of the
form $x_i x_j$ with $i\neq j$ are indefinite, so (unless $Q$ has a
dominant diagonal) the objective function is a non-convex one.

Moreover, a solution to the relaxed problem would not be of practical
use: already for the 2D-problem of optimizing curvature energies over
curves in the plane, the respective quadratic program favors
fractional solutions. The relaxation would therefore not be useful for
solving the integer program. However, in this case Amini et
al. \cite{Amini-et-al-90} showed that one can solve a linear program
instead. This inspired us for the major contribution of this work: to
cast the problem as an integer linear program.

\section{An integer linear programming approach}\label{sec:3}
\subsection{Augmented indicator vectors}

The key idea of the proposed integer linear program is to consider
additional indicator vectors. Aside from the indicator variables $x_i$
for basic triangles, one now also considers entries $x_{ij}$ corresponding to
\emph{pairs} of adjacent triangles. Such a pair is called \emph{quadrangle} in the following. We will denote $\Hat x$ the augmented vector $(x_1,\cdots,x_N,\cdots, x_{ij},\cdots)$ where $i\not=j$ run over all indices of adjacent triangles. The cost function can be easily written in a linear form with this augmented vector, i.e. it reads
$$\sum w_k\Hat x_k$$
with (see the notations of the previous section)
$$w_k=\left\{\begin{array}{ll}
q_{ii}&\mbox{if }\Hat x_k=x_i\\
q_{ij}&\mbox{if }\Hat x_k=x_{ij}\end{array}\right.$$

The major problem to overcome is how to set up a system of constraints
that guarantees consistency of the augmented vector: the indicator
variable $x_{ij}$ for the pair of triangles $i$ and $j$ should be~$1$
if and only if both the variables $x_i$ and $x_j$ are $1$. Otherwise
it should be~$0$. In addition, one again wants to optimize only over
indicator vectors that correspond to a triangular mesh.

To encode this in a linear constraint system, a couple of changes are
necessary. First of all, we will now have a constraint for each pair
of triangle and adjacent edge. Secondly, edges are no longer
oriented. Still, the set of pre-specified indices $\fixedset$ implies
that the orientation of the border is fixed - we still require that
for each edge of the boundary an adjacent (oriented) triangle is
fixed to constrain the conormal information.

To encode the constraint system we introduce a modified notion of
incidence. We are no longer interested in incidence of triangles and
edges. Instead we now consider the incidence of both triangles and
quadrangles to pairs of triangles and (adjacent) edges.


For convenience, we define that triangles are positive incident to
a pair of edge and triangle, whereas all quadrangles are negative incident. 

We propose an incidence matrix where lines correspond to pairs (triangle, edge) and columns to either triangles or quadrangles. The entries of this incidence matrix are either the incidence of a pair (triangle, edge) with a triangle, defined as
$$ 
d((\mbox{triangle }k,\mbox{edge } e),\mbox{triangle } i) = \begin{cases}
1 & \mbox{if } i=k,\ e \mbox{ is an edge of triangle } i\\
0 & \mbox{otherwise}
\end{cases},
$$
or the incidence of a  pair (triangle, edge) with a quadrangle, defined as
$$
d((\mbox{triangle }k,\mbox{edge } e),\mbox{quadrangle  }ij) =
\begin{cases}
-1 & \mbox{if }  i\!=\!k\mbox{ or } j\!=\!k \mbox{ and }  
i, j \mbox{ share } e
 \\
\ 0 & \mbox{otherwise}
\end{cases}.
$$
The columns of this incidence matrix are of two types: either with only 0's and exactly three $1$ (a column corresponding to a triangle $T$, whose three edges are found at lines $(T,e_1)$, $(T,e_2)$, $(T,e_3)$), or with only 0's and exactly two $(-1)$'s (a column corresponding to a quadrangle $(T_1,T_2)$ that matches with lines $(T_1,e_{12})$ and $(T_2,e_{12})$).

Again, both the conormal constraints and the boundary edges can be imposed by imposing additional triangles indexed by a collection $\fixedset$ of indices. The general constraint has the form
$$
\sum_i d((x_k,e),x_i)\, +\, \sum_{i,j} d((x_k,e),x_{ij}) = r'_{(k,e)},
$$ where the right-hand side depends whether the edge $e$ is shared by two triangles of the surface (and even several quadrangles in case of self-intersection), or belongs to the new boundary indicated by the additional triangles. If $e$ is an inner edge, then the sum must be zero due to our definition of $d$, otherwise there is an adjacent
triangle, but no adjacent quadrangle, so the right-hand side should be~$1$:
$$ r'_{(k,e)} = \begin{cases} 1 & \mbox{if } k \in \fixedset , e
\mbox{ is part of the modified boundary}\\ 0 & \mbox{otherwise}
\end{cases}
$$
To sum up, we get the following integer linear program:
\begin{eqnarray}
& \min\limits_{\Hat x} & \langle w,\Hat x\rangle \label{eq:ilp}\\
&\mbox{such that} & D\,\Hat x = r'  \nonumber \\
&& \Hat x_j = 1\quad \forall j \in \fixedset \nonumber \\
&& \Hat x_i \in \{0,1\}\quad \forall i \in \{1,\ldots, \Hat N\} \nonumber 
\end{eqnarray}
where $\Hat N$ is the total number of entries in $\Hat x$, namely all triangles plus all pairs of adjacent triangles. It is worth noticing that such formulation allows triangle surfaces with self-intersection. 

\subsection{On the linear programming relaxation}

Solving integer linear programs is an NP-complete problem, see e.g.
\cite[chapter 18.1]{Schrijver-book}. This implies that, to the noticeable exception of a few particular problems~\cite{Schrijver-book}, no
efficient solutions are known. As a consequence one often resorts to
solving the corresponding linear programming (LP) relaxation, i.e. one
drops the integrality constraints. In our case this means to solve the
problem:
\begin{eqnarray}
& \min\limits_{\Hat x} & \langle w, \Hat x\rangle \label{eq:lpf}\\
&\mbox{such that} & D\, \Hat x = r'  \nonumber \\
&& \Hat x_j = 1\quad \forall j \in \fixedset \nonumber \\
&& 0 \le \Hat x_i \le 1\quad \forall i \in \{1,\ldots, \Hat N\} \nonumber 
\end{eqnarray}
or, equivalently, by suitably augmenting $D$ and $r'$ in order to incorporate the second constraint $\Hat x_j = 1$, $\forall j \in \fixedset $:
\begin{eqnarray}
& \min\limits_{\Hat x} & \langle w, \Hat x\rangle \label{eq:lp}\\
&\mbox{such that} & \Hat D \Hat x = \Hat r  \nonumber \\
&& 0 \le \Hat x_i \le 1\quad \forall i \in \{1,\ldots, \Hat N\} \nonumber 
\end{eqnarray}
There are various algorithms for solving this problem, the most classical being the simplex algorithm and several interior point algorithms. Let us now discuss the conditions under which these relaxed solutions are also solutions of the original integer linear program. Recalling the basics of LP-relaxation~\cite{Schrijver-book}, the set of admissible solutions
$$P=\{\Hat x\in \R^{\Hat N},\;\Hat D \Hat x= \Hat r,\; 0\leq x\leq 1\}$$ is a polyhedron, i.e. a finite intersection of half-spaces in $\R^{\Hat N}$. A classical result states that minimizing solutions for the linear objective functions can be seeked among the extremal points of $P$ only, i.e. its vertices. Denoting $P_e$ the integral envelope of $P$, that is the convex envelope of $P\cap\Z^{\Hat N}$, another classical result states that $P$ has integral vertices only (i.e. vertices with integral coordinates) if and only if $P=P_e$

\par
Since $P=\{\Hat x\in \R^{\Hat N},\,\Hat D \Hat x= \Hat r, 0\leq \Hat x\leq 1\}$, according to Theorem 19.3 in~\cite{Schrijver-book}, a {\it sufficient} condition for having $P=P_e$ is the property of $B$ being totally unimodular, i.e. any square submatrix has determinant either $0$, $-1$ or~$1$. Under this condition, any extremal point of $P$ that is a solution of 
$$\min_{\Hat D \Hat x= \Hat r,\,\Hat x_i\in[0,1]}\langle w,\Hat x\rangle$$
has integral coordinates therefore is a solution of the original integer linear program
$$\min_{\Hat D \Hat x= \Hat r,\,\Hat x_i\in\{0,1\}}\langle w,\Hat x\rangle.$$
Theorem 19.3 in~\cite{Schrijver-book} mentions an interesting characterization of total unimodularity due to Paul Camion~\cite{camion-65}: a matrix is totally unimodular if, and only if, the sum of the entries of every Eulerian square submatrix (i.e. with even rows and columns) is divisible by four.

Unfortunately, we can prove that, as soon as the triangle space is rich enough, the incidence matrix $\Hat D$ does not satisfy Camion's criterion, therefore is not totally unimodular, and neither are the matrices for richer triangles spaces. As a consequence, there are choices of the triangle space for which the polyhedron $P=\{\Hat x\in \R^{\Hat N},\,\Hat D \Hat x= \Hat r, 0\leq \Hat x\leq 1\}$ may have not only integral vertices, or more precisely one cannot guarantee this property thanks to total unimodularity. This is summarized in the following theorem.

\begin{theorem}
The incidence matrix associated with any triangle space where each triangle has a large enough number of adjacent neighbors is not totally unimodular. 
\end{theorem}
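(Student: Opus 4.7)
The plan is to apply Camion's criterion stated just above: $\Hat D$ fails to be totally unimodular as soon as one can exhibit a square Eulerian submatrix whose entries sum to an integer not divisible by $4$. The construction will hinge on a single geometric configuration, namely three distinct triangles $T_1$, $T_2$, $T_3$ of the dictionary sharing a common edge $e$. A pigeonhole argument shows this configuration must arise whenever some triangle has enough adjacent neighbors: if $k_1, k_2, k_3$ denote the numbers of dictionary triangles meeting a given triangle $T$ across its three edges and $k_1+k_2+k_3\geq 4$, then some $k_i\geq 2$, so the edge $e_i$ is shared by $T$ and at least two further triangles.

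Given such $T_1$, $T_2$, $T_3$ sharing $e$, the dictionary contains three quadrangle variables $q_{12}$, $q_{13}$, $q_{23}$ attached to the three pairs. I would then restrict $\Hat D$ to the rows $(T_1,e)$, $(T_2,e)$, $(T_3,e)$ and to the columns $q_{12}$, $q_{13}$, $q_{23}$. By the definition of the incidence $d$, the entry at row $(T_k,e)$ and column $q_{ij}$ equals $-1$ when $k\in\{i,j\}$ and $0$ otherwise, so the resulting submatrix reads
\[
\begin{pmatrix} -1 & -1 & 0 \\ -1 & 0 & -1 \\ 0 & -1 & -1 \end{pmatrix}.
\]
Each row and each column contains exactly two nonzero entries, so the submatrix is Eulerian; its entries sum to $-6$, which is not divisible by $4$ (equivalently, its determinant equals $2$). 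Camion's criterion therefore fails, and $\Hat D$ is not totally unimodular.

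To cover any strictly richer triangle space, I would observe that whenever the dictionary is enlarged the three triangles and three quadrangles above remain present, so the same $3\times 3$ submatrix continues to appear as a submatrix of the corresponding incidence matrix and total unimodularity cannot be recovered. The only genuinely non-obvious step in this argument is identifying the right ``three triangles sharing a common spine'' configuration; once spotted, the verification reduces to reading off the entries and adding them up.
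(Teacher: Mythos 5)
Your proof is correct, and it follows the same high-level strategy as the paper --- exhibit an explicit square Eulerian submatrix of $\Hat D$ whose entries sum to a number not divisible by four and invoke Camion's characterization --- but the witness configuration is genuinely different and considerably more economical. The paper's proof relies on the elaborate configuration of Figure~\ref{fig:counterex}, involving some two dozen triangles arranged around three edges, and a $27\times 27$ submatrix (Table~\ref{tab}) whose row and column sums must be checked entry by entry and whose total is $-42\equiv 2\pmod 4$. You instead isolate the minimal obstruction: three triangles $T_1,T_2,T_3$ of the dictionary sharing a single edge $e$, whose three quadrangle columns restricted to the rows $(T_1,e),(T_2,e),(T_3,e)$ give the matrix $\left(\begin{smallmatrix}-1&-1&0\\ -1&0&-1\\ 0&-1&-1\end{smallmatrix}\right)$; this is Eulerian with sum $-6\equiv 2\pmod 4$, and your reading of the incidence $d$ (each quadrangle column carries exactly two $-1$'s, at the rows pairing each of its two triangles with the shared edge) matches the paper's own description of the column structure of $\Hat D$. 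Your argument buys two things the paper's does not: first, the hypothesis ``large enough number of adjacent neighbors'' is made quantitative by the pigeonhole step (four neighbors suffice to force two of them across the same edge); second, since your $3\times 3$ submatrix has determinant $2$, you obtain the failure of total unimodularity directly from the definition, so the appeal to Camion's criterion --- while valid --- is not even needed. The paper's larger configuration offers nothing extra for the statement as claimed, so your route is a strict simplification.
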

\begin{proof}
We show in Figure~\ref{fig:counterex} a configuration and, in Table~\ref{tab},  an associated square submatrix of the incidence matrix. The sum of entries over  each line and the sum over each column are even, though the total sum of the matrix entries is not divisible by four. By a result of Camion~\cite{camion-65}, the incidence matrix is not totally unimodular which yields the conclusion according to~\cite{Schrijver-book}[Thm 19.3]. Clearly, any triangle space for which this configuration can occur is also associated to an incidence matrix that is not totally unimodular. 
\end{proof}
It is worth noticing that the previous theorem does not imply that the extremal points of the polyhedron $P$ are necessarily not all integral. It only states that this cannot be guaranteed as usual by the criterion of total unimodularity. We will discuss in the next section what additional informations about integrality can be obtained from a few experiments that we have done using classical solvers for addressing the relaxed linear problem.

\begin{figure}
\begin{center}
\includegraphics[width=0.55 \textwidth,angle=270]{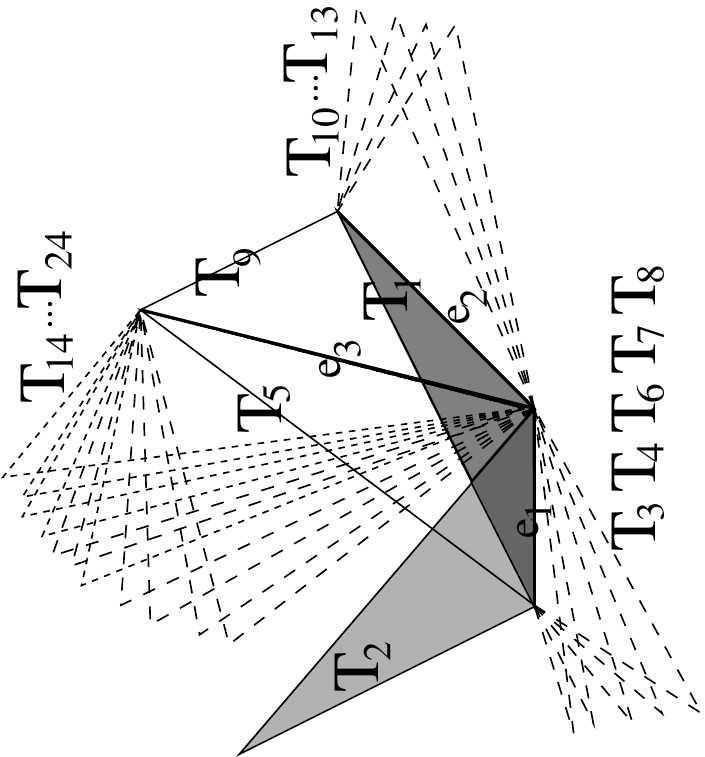}
\caption{A configuration in a triangle space with sufficient resolution. The associated incidence matrix is Eulerian (see text) but does not satisfy Camion's criterion, thus is not totally unimodular.}
\label{fig:counterex}
\end{center}
\end{figure}

\begin{landscape}
\begin{table}
\caption{A square incidence matrix associated with the configuration in Figure~\ref{fig:counterex}. It is Eulerian, i.e. the sum along each line and the sum along each column are even, but the total sum is not divisible by four. According to Camion~\cite{camion-65}, the matrix is not totally unimodular.}\label{tab}
\begin{footnotesize}
\begin{tabular}{@{\hspace{0.2mm}}c@{\hspace{0.2mm}}|@{\hspace{0.2mm}}c@{\hspace{0.2mm}}|@{\hspace{0.2mm}}c@{\hspace{0.2mm}}|@{\hspace{0.2mm}}c@{\hspace{0.2mm}}|@{\hspace{0.2mm}}c@{\hspace{0.2mm}}|@{\hspace{0.2mm}}c@{\hspace{0.2mm}}|@{\hspace{0.2mm}}c@{\hspace{0.2mm}}|@{\hspace{0.2mm}}c@{\hspace{0.2mm}}|@{\hspace{0.2mm}}c@{\hspace{0.2mm}}|@{\hspace{0.2mm}}c@{\hspace{0.2mm}}|@{\hspace{0.2mm}}c@{\hspace{0.2mm}}|@{\hspace{0.2mm}}c@{\hspace{0.2mm}}|@{\hspace{0.2mm}}c@{\hspace{0.2mm}}|@{\hspace{0.2mm}}c@{\hspace{0.2mm}}|@{\hspace{0.2mm}}c@{\hspace{0.2mm}}|@{\hspace{0.2mm}}c@{\hspace{0.2mm}}|@{\hspace{0.2mm}}c@{\hspace{0.2mm}}|@{\hspace{0.2mm}}c@{\hspace{0.2mm}}|@{\hspace{0.2mm}}c@{\hspace{0.2mm}}|@{\hspace{0.2mm}}c@{\hspace{0.2mm}}|@{\hspace{0.2mm}}c@{\hspace{0.2mm}}|@{\hspace{0.2mm}}c@{\hspace{0.2mm}}|@{\hspace{0.2mm}}c@{\hspace{0.2mm}}|@{\hspace{0.2mm}}c@{\hspace{0.2mm}}|@{\hspace{0.2mm}}c@{\hspace{0.2mm}}|@{\hspace{0.2mm}}c@{\hspace{0.2mm}}|@{\hspace{0.2mm}}c@{\hspace{0.2mm}}|@{\hspace{0.2mm}}c@{\hspace{0.2mm}}|@{\hspace{0.2mm}}c@{\hspace{0.2mm}}}
& $T_1$&$T_5$&$T_9$&$T_{1,2}$&$T_{2,3}$&$T_{3,4}$&$T_{4,5}$&$T_{2,5}$&$T_{5,6}$&$T_{1,6}$&$T_{1,7}$&$T_{7,8}$&$T_{2,8}$&$T_{1,9}$&$T_{9,10}$&$T_{10,11}$&$T_{1,11}$&$T_{1,12}$&$T_{12,13}$&$T_{9,13}$&$T_{9,5}$&$T_{5,14}$&$T_{14,15}$&$T_{9,15}$&$T_{9,16}$&$T_{16,17}$&$T_{5,17}$&$\sum$\\
\hline
$(T_1,e_1)$&1&&&-1&&&&&&-1&-1&&&&&&&&&&&&&&&&&-2\\
\hline
$(T_2,e_1)$&&&&-1&-1&&&-1&&&&&-1&&&&&&&&&&&&&&&-4\\
\hline
$(T_3,e_1)$&&&&&-1&-1&&&&&&&&&&&&&&&&&&&&&&-2\\
\hline
$(T_4,e_1)$&&&&&&-1&-1&&&&&&&&&&&&&&&&&&&&&-2\\
\hline
$(T_5,e_1)$&&1&&&&&-1&-1&-1&&&&&&&&&&&&&&&&&&&-2\\
\hline
$(T_6,e_1)$&&&&&&&&&-1&-1&&&&&&&&&&&&&&&&&&-2\\
\hline
$(T_7,e_1)$&&&&&&&&&&&-1&-1&&&&&&&&&&&&&&&&-2\\
\hline
$(T_8,e_1)$&&&&&&&&&&&&-1&-1&&&&&&&&&&&&&&&-2\\
\hline
\hline
$(T_1,e_2)$&1&&&&&&&&&&&&&-1&&&-1&-1&&&&&&&&&&-2\\
\hline
$(T_9,e_2)$&&&1&&&&&&&&&&&-1&-1&&&&&-1&&&&&&&&-2\\
\hline
$(T_{10},e_2)$&&&&&&&&&&&&&&&-1&-1&&&&&&&&&&&&-2\\
\hline
$(T_{11},e_2)$&&&&&&&&&&&&&&&&-1&-1&&&&&&&&&&&-2\\
\hline
$(T_{12},e_2)$&&&&&&&&&&&&&&&&&&-1&-1&&&&&&&&&-2\\
\hline
$(T_{13},e_2)$&&&&&&&&&&&&&&&&&&&-1&-1&&&&&&&&-2\\
\hline
\hline
$(T_9,e_3)$&&&1&&&&&&&&&&&&&&&&&&-1&&&-1&-1&&&-2\\
\hline
$(T_5,e_3)$&&1&&&&&&&&&&&&&&&&&&&-1&-1&&&&&-1&-2\\
\hline
$(T_{14},e_3)$&&&&&&&&&&&&&&&&&&&&&&-1&-1&&&&&-2\\
\hline
$(T_{15},e_3)$&&&&&&&&&&&&&&&&&&&&&&&-1&-1&&&&-2\\
\hline
$(T_{16},e_3)$&&&&&&&&&&&&&&&&&&&&&&&&&-1&-1&&-2\\
\hline
$(T_{17},e_3)$&&&&&&&&&&&&&&&&&&&&&&&&&&-1&-1&-2\\
\hline
\hline
&\multicolumn{26}{c}{plus 7 lines $(T_{18},e_3),\cdots,(T_{24},e_3)$ with only $0$ entries to have a square matrix}&\\
\hline
\hline
$\sum$&2&2&2&-2&-2&-2&-2&-2&-2&-2&-2&-2&-2&-2&-2&-2&-2&-2&-2&-2&-2&-2&-2&-2&-2&-2&-2&-42
\end{tabular}
\end{footnotesize}
\vspace*{0.5cm}
\end{table}
\end{landscape}

\subsection{Testing the relaxed linear problem}
We have tested the relaxed formulation on a few examples at low-resolution using the dual simplex method implemented in the CLP solver. The main reason for using low-resolution is that the number of triangles becomes significantly important as the resolution increases, and both the computational cost and the memory requirements tend to become large. Another reason for working at low-resolution is that there is no need to go high before finding a case of non-integrality. Indeed, consider the examples in figure~\ref{fig:example1}: integral solutions are obtained when the resolution is very low (i.e. when there is no risk to have configurations like in  figure~\ref{fig:counterex}). In the last configuration, however, the optimal solution of the relaxed problem has fractional entries. This confirms that our initial problem cannot be addressed though the classical techniques of relaxation, and with usual LP solvers.

\begin{figure}
\begin{center}
\begin{tabular}{ll}
\includegraphics[width=0.35 \textwidth]{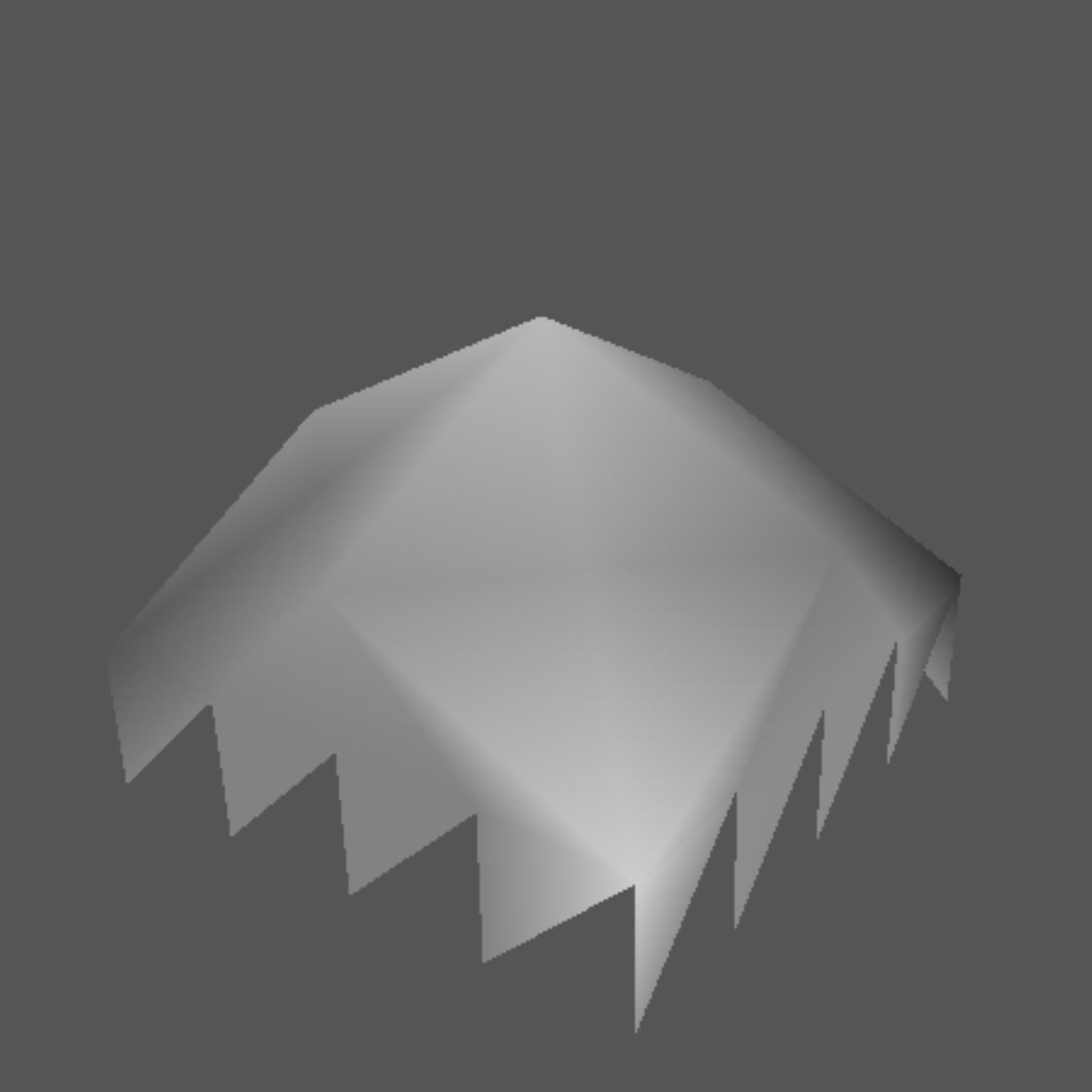}&\includegraphics[width=0.35 \textwidth]{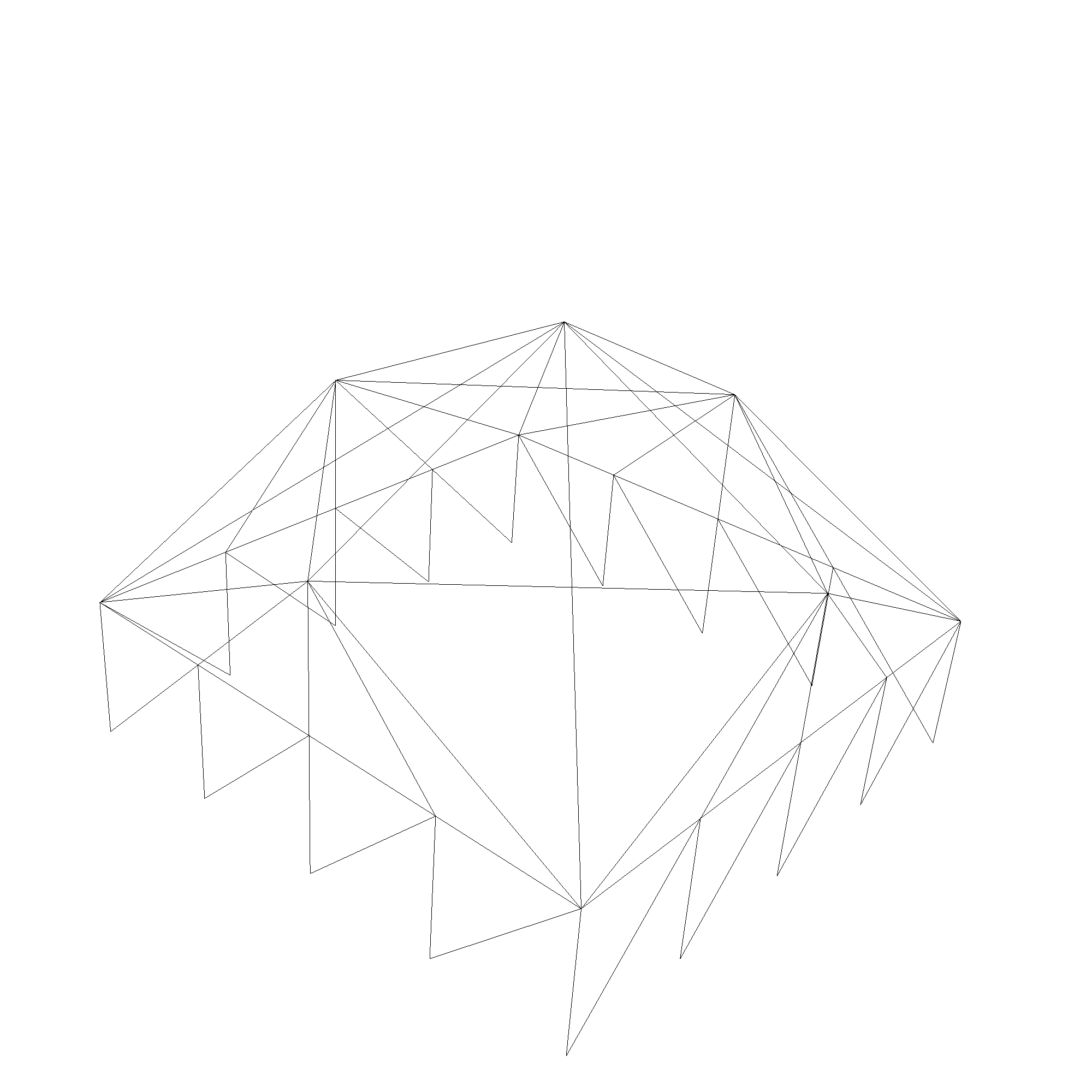}\\
\includegraphics[width=0.35 \textwidth]{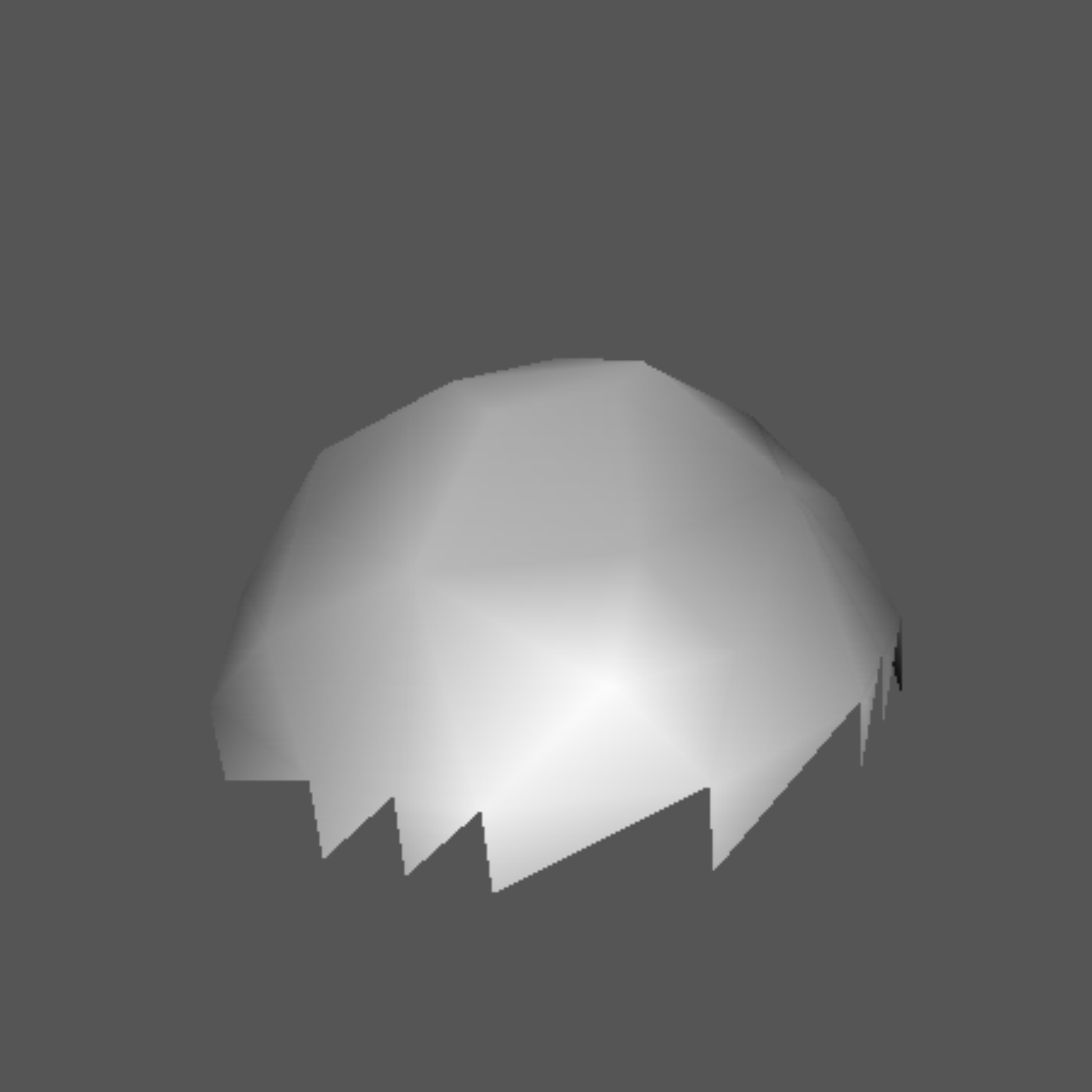}&\includegraphics[width=0.35 \textwidth]{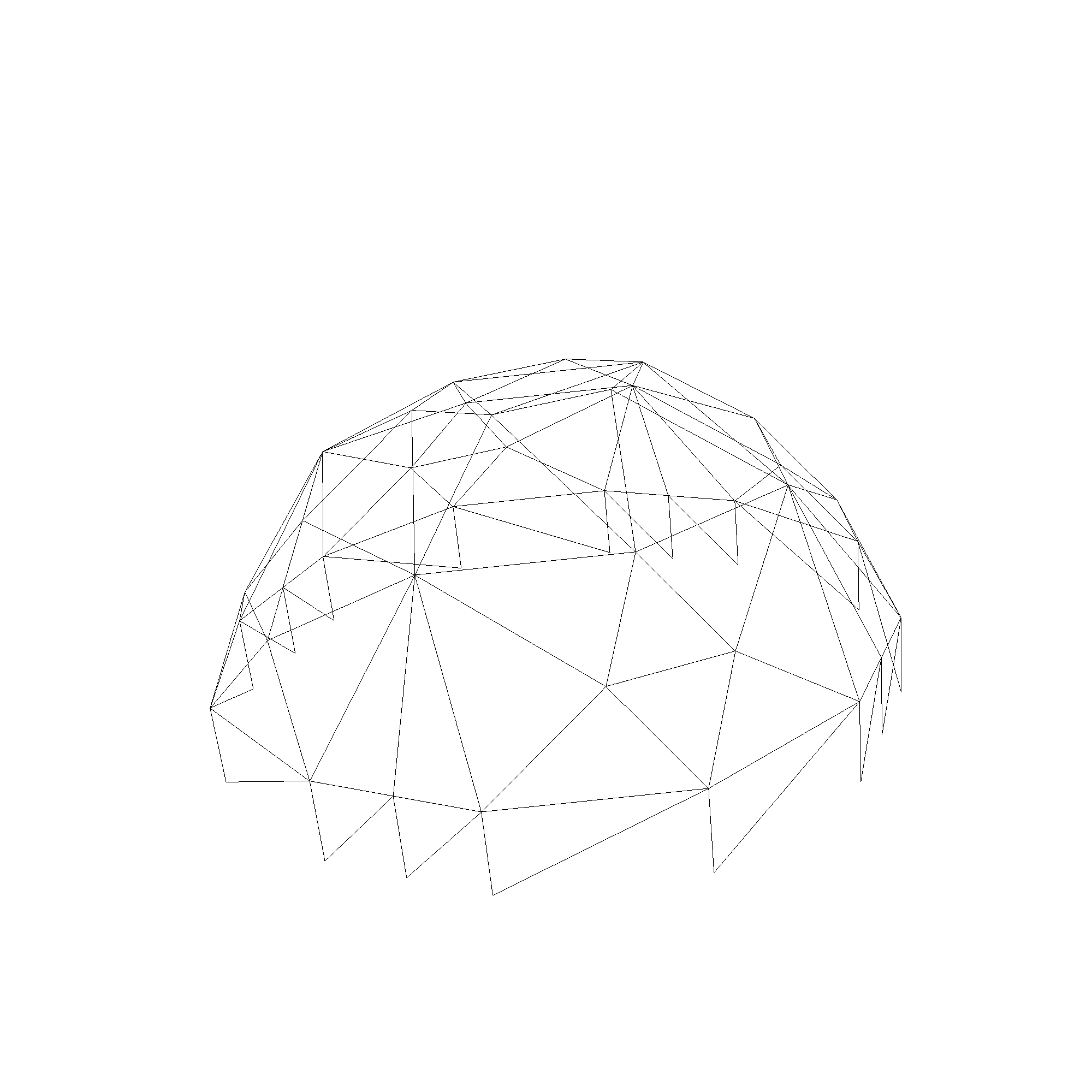}\\
\includegraphics[width=0.35 \textwidth]{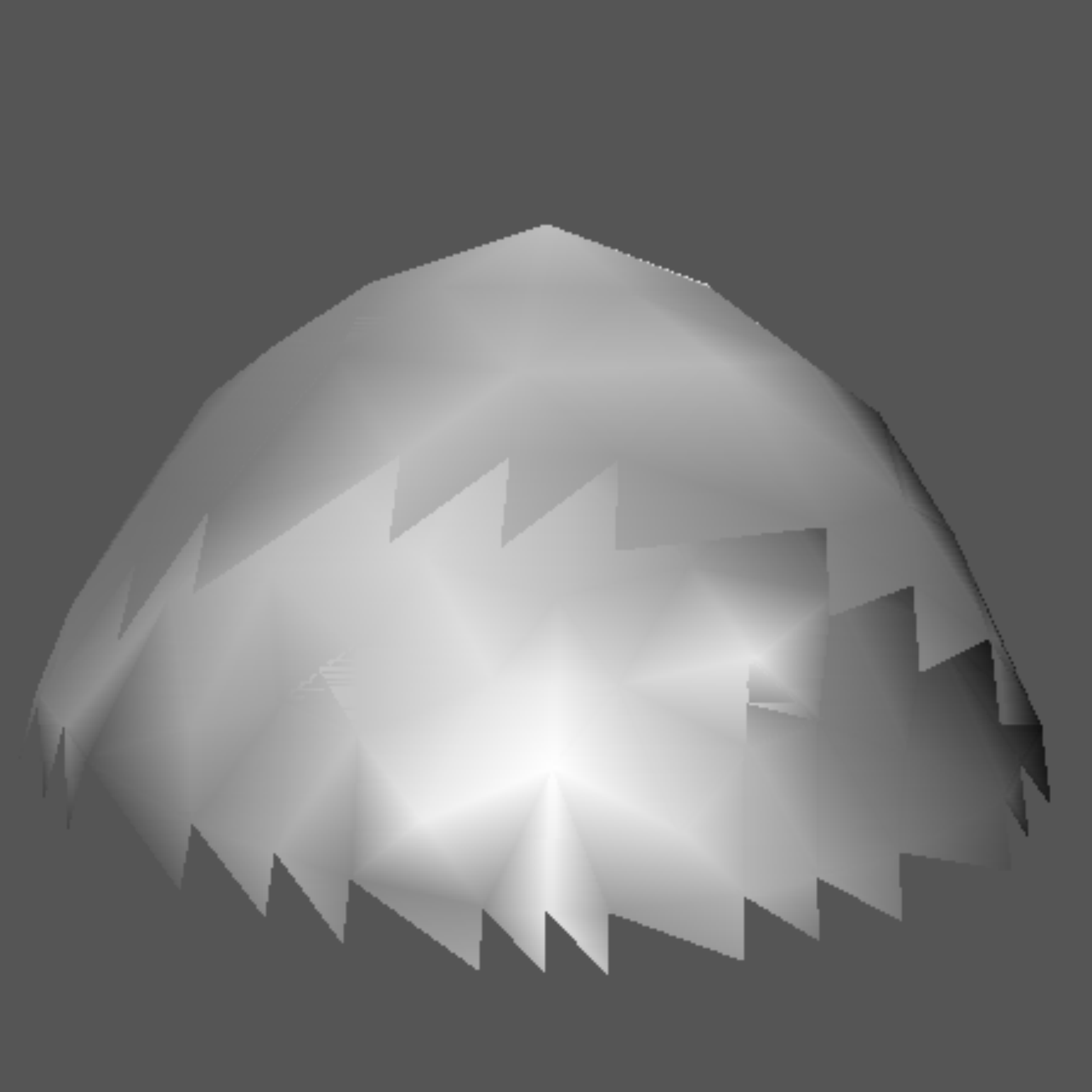}&\includegraphics[width=0.35 \textwidth]{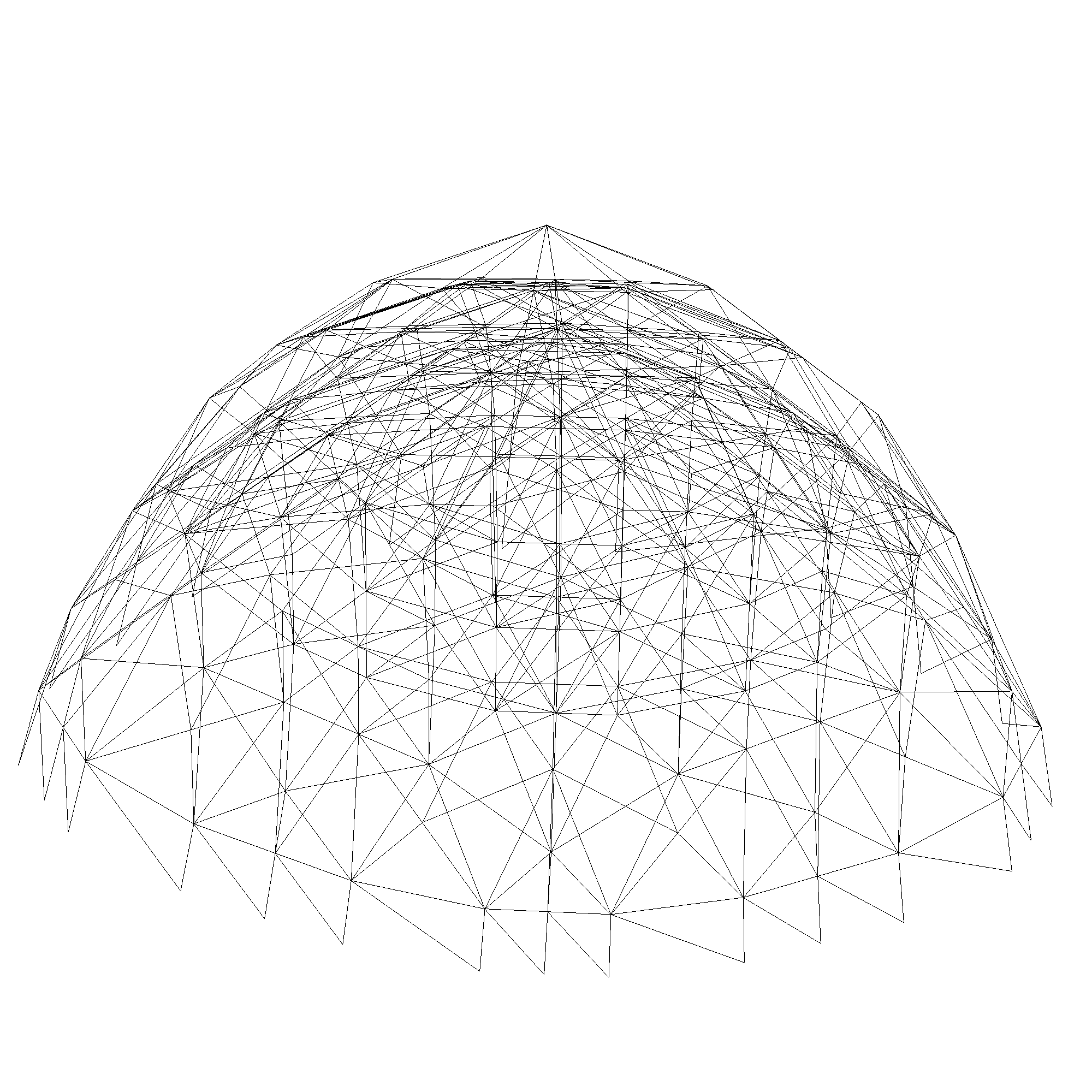}
\end{tabular}
\caption{A series of experiments (the result and the mesh edges) with increasing resolution of the triangle space (and various boundary constraints). An integral solution of the relaxed problem is obtained by a standard LP-solver in both top cases. As for the last case, the triangle space resolution is now large enough for having configurations similar to the counterexample of figure~\ref{fig:counterex}. And indeed, an optimal solution is found for the relaxed problem that is not integral. 
The mesh on the bottom-right shows actually two nested semi-spheres whose triangles have, at least for a few of them, non binary labels.}
\label{fig:example1}
\end{center}
\end{figure}
\subsection{On integer linear programming}
Our results above indicate that, necessarily, integer linear solvers~\cite{Schrijver-book,achterberg-07} should be used. These commonly start with solving the linear programming relaxations, then derive further valid inequalities (called \emph{cuts}) and/or apply a branch-and-bound scheme. Due to the small number of fractional values that we have observed in our experiments, it is quite likely that the derivation of a few cuts only would give integral solutions. However, we did not test this so far because of the running times of this approach: in cases where we get fractional solutions the dual simplex method often needs as long as two weeks and up to $12$ GB memory! From experience with other linear programming problems we consider it likely that the interior point methods implemented in commercial solvers will be much faster here (we expect less than a day). At the same time, we expect the memory consumption to be considerably much higher, so the method would most probably be unusable in practice.

We strongly believe that a specific integer linear solver should be developed rather than using general implementations. It is well known that, for a few problems like the knapsack problem~\cite{Schrijver-book}[chapter 24.6], their specific structure gives rise to ad-hoc efficient approaches. Recalling that our incidence matrix is very sparse and well structured (the nonzero entries of each column are either exactly two $(-1)$, or exactly three $1$) we strongly believe that an efficient integer solver can be developed and our approach can be amenable to higher-resolution results in the near future.

\section{Conclusion}
We have shown that the minimization under boundary constraints of mean curvature based energies over surfaces, and in particular the Willmore energy, can be cast as an integer linear program. Unfortunately, this integer program is not equivalent to its relaxation so the classical LP algorithms offer no warranty that the integer optimal solution will be found.  This implies that pure integer linear algorithms must be used, which are in general much more involved. We believe however that the particular structure of the problem paves the way to a dedicated algorithm that would provide high-resolution {\it global} minimizers of the Willmore boundary problem and generalizations. This is the purpose of future research.
\bibliographystyle{splncs} 

\end{document}